\newtheorem{proposition}{Proposition}
\def\ba{{\boldsymbol{a}}}
\def\bb{{\boldsymbol{b}}}
\def\bg{{\boldsymbol{g}}}
\def\bh{{\boldsymbol{h}}}
\def\bn{{\boldsymbol{n}}}
\def\bw{{\boldsymbol{w}}}
\def\bx{{\boldsymbol{x}}}
\def\by{{\boldsymbol{y}}}
\def\bA{{\boldsymbol{A}}}
\def\bD{{\boldsymbol{D}}}
\def\bG{{\boldsymbol{G}}}
\def\bH{{\boldsymbol{H}}}
\def\bI{{\boldsymbol{I}}}
\def\bN{{\boldsymbol{N}}}
\def\bY{{\boldsymbol{Y}}}
\begin{document}

\title{Performance Analysis of NOMA in Training Based Multiuser MIMO Systems}

\author{\IEEEauthorblockN{Hei Victor Cheng, Emil Bj\"ornson, and Erik G. Larsson}

\IEEEauthorblockA{Department of Electrical Engineering (ISY), Link\"oping University, Sweden\\
Email: \{hei.cheng, emil.bjornson, erik.g.larsson\}@liu.se}

\thanks{This work was supported by the Swedish Research Council (VR), the Link\"oping University Center for Industrial Information Technology (CENIIT), and the ELLIIT. Part of this work has been presented at IEEE International Workshop on Signal Processing Advances in Wireless Communications (SPAWC) 2017 \cite{CBL2017}, however there is an error in the Fig. 2 which is corrected in this paper.}}

\maketitle

\begin{abstract}
This paper considers the use of NOMA in multiuser MIMO systems in practical scenarios where CSI is acquired through pilot signaling. A new NOMA scheme that uses shared pilots is proposed. Achievable rate analysis is carried out for different pilot signaling schemes including both uplink and downlink pilots. The achievable rate performance of the proposed NOMA scheme with shared pilot within each group is compared with the traditional orthogonal access scheme with orthogonal pilots. Our proposed scheme is a generalization of the orthogonal scheme, and can be reduced to the orthogonal scheme when appropriate power allocation parameters are chosen. Numerical results show that when downlink CSI is available at the users, our proposed NOMA scheme outperforms orthogonal schemes. However with more groups of users present in the cell, it is preferable to use multi-user beamforming in stead of NOMA.
\end{abstract}

\section{Introduction}

Non-orthogonal-multiple-access (NOMA) is a new multiple-access concept
proposed for next generation wireless networks \cite{SKB2013}.  The key
idea behind NOMA is the use of superposition coding \cite{Cover1972},
and associated interference cancellation techniques, to serve multiple
terminals in the same time-frequency slot. This is classified as NOMA in the power domain.
NOMA provides the ability to increase capacity, especially when the number of users exceeds the
dimension of the channel coherence interval, or the number of spatial
dimensions (antennas) available for multiplexing is limited.  The
technology is currently attracting much attention \cite{DWYHIW2015,DAP2016,DLCSEIP2017,NOMAbook}. In the standardization of 3GPP-LTE-Advanced
networks, a NOMA technique for the downlink (DL), called multiuser
superposition transmission (MUST), was recently proposed \cite{MUST}.

Concurrently, multiuser MIMO is becoming a cornerstone technology in emerging standards for
wireless access. The idea is to use multiple, phase-coherently
operating antennas at the base station to simultaneously serve many
terminals and separate them in the spatial domain. The basic multiuser MIMO concepts and the associated
information theory go back a long time \cite{Swales1990,Anderson1991,CS2003}.  The
most useful form of multiuser MIMO is massive MIMO, which emerged more
recently \cite{TM2010,MLYN2016}. In massive MIMO, the base
stations use hundreds of antennas to serve tens of terminals --
harnessing a large spatial multiplexing gain for high area
throughputs, as well as a large array gain for improved coverage.

The question addressed in this paper is under what circumstances the
use of NOMA can provide gains in multiuser MIMO systems. While this question per se is not new, no existing study to the
authors' knowledge addressed it under realistic assumptions on the
availability of channel state information (CSI).  Specifically,
previous work either assumed perfect CSI \cite{DAP2016,DYFP2014} or
only statistical CSI \cite{SHIP2015}. In contrast, we consider the use
of training (pilot transmission) to acquire estimated CSI, and we
derive rigorous capacity bounds for NOMA-based access under these practical conditions. Training-based NOMA schemes have been considered in \cite{YDFK2016}, but only for single-antenna systems and hence only downlink pilots are sent to the users for estimating their effective channel gains. Moreoever, in multi-user MIMO the effective channels depend on the beamforming, which complicates
the analysis. Beamforming with imperfect CSI also creates extra interference to the users, which has not been investigated in the literature. In contrast, in this work pilots are transmitted on the uplink (UL), facilitating the base station to estimate all channels. By virtue of reciprocity and time-division-duplex (TDD) operation, the so-obtained estimates constitute legitimate estimates of the downlink channel as well and can be used for coherent beamforming. However, since the terminals do not know their effective channels, we consider also the possibility of sending (beamformed) pilots in the DL.

The assumptions made on availability of CSI are critical
in the analysis of wireless access performance: Perfect CSI (or even
high-quality CSI) is unobtainable in environments with mobility, and performance analyses conducted under perfect-CSI
assumptions often yield significantly overoptimistic results. Conversely,
the reliance on only statistical CSI precludes the full exploitation
of spatial multiplexing gains, rendering any performance results
overpessimistic. The quality of the channel estimates that can
ultimately be obtained is dictated by the length of the channel
coherence interval (CI) (product of the coherence time and the coherence
bandwidth): the higher mobility, the less room for pilots, the
lower-quality CSI -- and vice versa. Since the coherence time is
proportional to the wavelength, the use of higher carrier-frequencies
accentuates this problem. In high mobility and at high frequencies,
the channel coherence may become very short and eventually one is
forced to use non-coherent communication techniques \cite{JCPM2016}.

The specific technical contributions of this paper are:
\begin{itemize}
\item We propose a training scheme to obtain CSI and utilize the NOMA concept in a DL multiuser MIMO system.
\item The derivation of new, rigorous, semi-closed form lower bounds
  on the DL capacity in multiuser MIMO with NOMA, with and without DL pilots.
\item A numerical demonstration that NOMA can give gains in multiuser
  MIMO with estimated CSI under appropriate conditions, and a discussion of relevant application
  scenarios, most importantly that of rate-splitting and multicasting.
\end{itemize}

\nopagebreak

\section{System Model}
We consider a single-cell massive MIMO system with $M$ antennas at the base station (BS) and $K$ (even number) single-antenna users. Among these users, $K/2$ of them are located in the cell center, while the other $K/2$ users are at the cell edge. TDD operation is assumed and therefore the BS acquires downlink channel estimates through uplink pilot signaling, by exploiting channel reciprocity. These estimates are used to perform downlink multiuser beamforming. These operations have to be done within the same CI, where the channels are approximately constant. Therefore the more symbols spent on uplink training, the fewer symbols are available for data. We consider non-line-of-sight communication and model the small-scale fading for each user as independent Rayleigh fading. We denote the small-scale fading realizations for the users at the cell center as
\begin{equation}
\bg_k \sim CN(\mathbf{0}, \bI_M), \quad k=1,\ldots, K/2.
\end{equation}
The corresponding large-scale fading parameters are $\beta_g^k>0,~k=1,\ldots,K/2$ for the users in the cell center; the actual channel realization is then $\sqrt{\beta_g^k}\bg_k$.

Similarly, the small-scale fading realizations for the users at the cell edge are denoted as
\begin{equation}
\bh_k \sim CN(\mathbf{0}, \bI_M), \quad k=1,\ldots, K/2.
\end{equation}
The corresponding large-scale fading parameters are $\beta_h^k>0,~k=1,\ldots,K/2$ for the users at the cell edge. The actual channel is then $\sqrt{\beta_g^k}\bg_k$. The large-scale fading is widely different between the two sets of users: $\beta_g^k\gg \beta_h^k$. Note that this is the scenario of interest to us, but the formulas will actually be valid for any values of $\beta_g^k$ and $\beta_h^k$. The names "cell edge" and "cell center" are just descriptive, but should not be interpreted literally.

The BS is assumed to know the deterministic parameters $\beta_g^k$ and $\beta_h^k$. However the small-scale fading realizations are unknown a priori and changing independently from one CI to another CI. To estimate the small-scale fading realizations at the BS, in traditional TDD multiuser MIMO, orthogonal uplink pilots are transmitted from the users in the cell. However, the number of available orthogonal pilot sequences is limited by the size of the CI and this effectively limits the number of users that can be scheduled simultaneously.
In this study, we are interested in the case when $K$ is greater than the number of available pilot sequences. To facilitate discussion and analysis, we assume that there are only $K/2$ orthogonal pilot sequences available. With this assumption, we compare two schemes that make use of the $K/2$ pilot sequences differently.

\subsection{Orthogonal Access Scheme}
The first scheme is the traditional orthogonal access scheme \cite{FWC2005} that schedules $K/2$ users in a fraction $\eta$ of time-frequency resources, and then serve the others in the remaining fraction $1-\eta$ of the resources. To minimize near-far effects, we schedule the $K/2$ users at the cell center in the first fraction $\eta$, followed by the other $K/2$ users at the cell edge in the remaining $1-\eta$ of the resources.
From now on we call this \emph{Scheme-O}.

\subsection{Proposed NOMA Scheme}

The second scheme is a generalization of an existing scheme in the NOMA literature\cite{Kim2013}, which creates $K/2$ groups, each with one user at the cell edge and one at the cell center.

In \cite{Kim2013}, the beamformers are selected based on the channel of the cell center user, but NOMA with superposition coding is applied within each group so that the cell edge user can get a separate data signal. The beamformers can be selected to mitigate the inter-group interference. For example, in \cite{Kim2013}, zero-forcing beamforming is applied to cancel inter-group interference. However, this existing scheme can only provide the user at the cell edge with a small data rate. This is so because the beams are directed to the stronger user in the group, thus the weaker user will not have any beamforming gain and this results in low received power and no interference suppression. Moreover, the existing work is based on the impractical assumption of perfect CSI. In \cite{Choi2015} a two-stage beamformer is proposed where the outer stage aims to cancel the inter-group interference and the inner stage beamformer is optimized to enhance the rate performance for the users within the group. However this approach needs perfect CSI at the BS which is hard to obtain in practice and therefore we do not consider it here.

We propose a generalization of the NOMA scheme from \cite{Kim2013} and devise a way to estimate the channels in practice.
To resolve the pilot-shortage problem, we propose to reuse the same pilot for multiple terminals in the same cell. In particular, the BS allocates the same pilot to the two terminals in a group, where one is in the cell center and one is at the cell edge.\footnote{This scheme can be extended to more than two users, and we will briefly discuss about this in Section \ref{practical_issues}.} Since the two users are using the same pilot and have the same small-scale fading statistics, we will later see that the BS cannot distinguish their channel responses. However, the BS can estimate a linear combination of the channels to both terminals from the pilot transmission. This estimate provides a useful description of the combined channel, particularly, if power control is used to even out the pilot signal strengths of the two terminals. In our proposed scheme, the BS  beamforms a combination of the data symbols intended for the two terminals using the estimated channels. We make use of the NOMA concept for which the symbols intended for different users are super-imposed using super-position coding. The cell edge user performs the decoding by treating inter-user interference as noise, while the cell center user decodes the other user's data first and performs interference cancellation before decoding its own data. Since the beamformers are based on the channels of all users, the proposed scheme can deliver good data rates to everyone.

From now on we call this generalized NOMA scheme \emph{Scheme-N}.

Fig. \ref{CI} shows the frame structure for the two schemes, Scheme-O and Scheme-N, and Fig. \ref{superposition} shows the training and beamforming operations for Scheme-N.
\begin{figure}
        \begin{psfrags}
        \psfrag{T}[Bc][Bc][0.8]{One CI}
        \psfrag{T2}[Bc][Bc][0.8]{Another CI}
        \psfrag{edge}[Bc][Bc][0.8]{Cell Edge Users}
        \psfrag{data}[Bc][Bc][0.8]{DL Data:}
        \psfrag{center}[Bc][Bc][0.8]{Cell Center Users}
        \psfrag{p}[Bc][Bc][0.8]{Pilots}
        \psfrag{dp}[Bc][Bc][0.8]{Pilots}
        \psfrag{ul}[Bc][Bc][0.8]{UL}
        \psfrag{dl}[Bc][Bc][0.8]{DL}
        \psfrag{all}[Bc][Bc][0.8]{All Users}
        \center \includegraphics[width=\linewidth,keepaspectratio]{ci.eps}
        \end{psfrags}
        \caption {\label{CI} Frame structure in the considered training based multiuser MIMO systems. Upper figure: frame structure for the proposed Scheme-N, where all users are scheduled by sharing pilots. Bottom figure: common frame structure for Scheme-O, where users are scheduled in different CIs.}
\end{figure}

\begin{figure}
        \begin{psfrags}
        \psfrag{bg}[Bc][Bc][0.8]{user (1,g)}
        \psfrag{bh}[Bc][Bc][0.8]{user (1,h)}
        \psfrag{bg2}[Bc][Bc][0.8]{user (2,g)}
        \psfrag{bh2}[Bc][Bc][0.8]{user (2,h)}
        \psfrag{a}[Bc][Bc][0.8]{(a)}
        \psfrag{b}[Bc][Bc][0.8]{(b)}
        \center \includegraphics[width=\linewidth,keepaspectratio]{superposition_cell.eps}
        \end{psfrags}
        \caption {\label{superposition} The training and the beamforming stages for Scheme-N. (a) the transmissions during the uplink training stage where two users with the same index share the same orthogonal pilot. (b) the beamforming transmission for the data where the same beam is formed for every two users.}
\end{figure}

\section{Uplink Channel Estimation}\label{channel_estimation}
In this section we consider the uplink channel estimation for the two schemes that we are comparing. The channel estimation is different from in conventional systems since the number of users scheduled in one slot and the pilot orthogonality are different. We denote the pilot matrix by $\mathbf{\Phi}\in \mathbb{C}^{K/2\times K/2}$ that contains the $K/2$ orthogonal pilot sequences in its rows, i.e. $\mathbf{\Phi}\mathbf{\Phi}^H=\mathbf{I}_{K/2}$.

For Scheme-O, the $K/2$ users at the cell center are scheduled first, and the received uplink pilot signal $\bY_{uc}^O \in \mathbb{C}^{M\times K/2}$ is
\begin{equation}\label{uloc}
\bY_{uc}^O=\sqrt{p_u}\bG\bD_g\mathbf{\Phi}+\bN_{uc},
\end{equation}
where $\bD_g$ is a diagonal matrix with $\sqrt{\beta_g^1},\ldots,\sqrt{\beta_g^{K/2}}$ on its diagonal.
Then the $K/2$ users at the cell edge are scheduled in a subsequent CI and the received uplink pilot signal $\bY_{ue}^O \in \mathbb{C}^{M\times K/2}$ is
\begin{equation}\label{uloe}
\bY_{ue}^O=\sqrt{p_u}\bH\bD_h\mathbf{\Phi}+\bN_{ue},
\end{equation}
where $\bD_h$ is a diagonal matrix with $\sqrt{\beta_h^1},\ldots,\sqrt{\beta_h^{K/2}}$ on its diagonal.
For Scheme-N, the received uplink pilot signal $\bY_{u}^N \in \mathbb{C}^{M\times K/2}$ is
\begin{equation}\label{uln}
\bY_{u}^N=\sqrt{p_u}\bG\bD_g\bA_g\mathbf{\Phi}+\sqrt{p_u}\bH\bD_h\bA_h\mathbf{\Phi}+\bN_{u},
\end{equation}
where $\bA_g$ and $\bA_h$ are diagonal matrices with $\sqrt{\alpha_g^{1}},\ldots,\sqrt{\alpha_g^{K/2}}$ and $\sqrt{\alpha_h^1},\ldots , \sqrt{\alpha_h^{K/2}}$ on the diagonal respectively. $\bN_{uc}$, $\bN_{ue}$ and $\bN_{u}$ represent the additive noise during pilot transmission with independent and identically distributed (i.i.d.) $CN(0,1)$ entries. $\alpha_h^k\leq1$ and $\alpha_g^k\leq1$ are the positive power control parameters applied to the pilot to (potentially) even out the channel estimation quality between the users in the same group.

Without loss of generality, the $k^{th}$ user at the cell center is paired with the $k^{th}$ user at the cell edge to form the $k^{th}$ group in Scheme-N, and they are using the same pilot sequence. From now on we call the cell edge user in the $k^{th}$ group ``user $(k,h)$'' and the cell center user in the $k^{th}$ group ``user $(k,g)$''. 

\subsection{MMSE Channel Estimation for Scheme-O}

In this subsection, we consider the channel estimation for Scheme-O. The estimates will be used in the next section for performance analysis. The BS first processes the received pilots signals by multiplying with $\mathbf{\Phi}^H$ from the right. The processed pilot signal in \eqref{uloc} becomes
\begin{equation}
\bar{\by}_{uc,k}^O=[\bY_{uc}^O\mathbf{\Phi}^H]_k=\sqrt{p_u\beta_g^k}\bg_k+\bar{\bn}_{uc,k}, \quad k=1,\ldots, K/2,
\end{equation}
where $\bar{\bn}_{uc,k}=[\bN_{uc}\mathbf{\Phi}^H]_k\sim CN(\mathbf{0},\bI_M)$, for the users at the cell center, and where $[\cdot]_k$ denotes the kth column of a matrix . The processed pilot signal in \eqref{uloe} becomes
\begin{equation}
\bar{\by}_{ue,k}^O=[\bY_{ue}^O\mathbf{\Phi}^H]_k=\sqrt{p_u\beta_h^k}\bh_k+\bar{\bn}_{ue,k}, \quad k=1,\ldots, K/2,
\end{equation}
where $\bar{\bn}_{ue,k}=[\bN_{ue}\mathbf{\Phi}^H]_k\sim CN(\mathbf{0},\bI_M)$, for the users at the cell edge.

Based on the processed received pilots, the BS then performs channel estimation. We consider MMSE channel estimation here. Using classical results from \cite{Kay1993}, we obtain the MMSE channel estimate of $\bg_k$ is
\begin{equation}
\hat{\bg}_k=\frac{\sqrt{p_u\beta_g^k}}{p_u\beta_g^k+1}\bar{\by}_{uc,k},\quad k=1,\ldots,K/2
\end{equation}
for users at the cell center and the MMSE estimate of $\bh_k$ is
\begin{equation}
\hat{\bh}_k=\frac{\sqrt{p_u\beta_h^k}}{p_u\beta_h^k+1}\bar{\by}_{ue,k},\quad k=1,\ldots,K/2
\end{equation}
for users at the cell edge.
\subsection{MMSE Channel Estimation for Scheme-N}
Similar to the case of Scheme-O, the BS first processes the received pilot signal by multiplying with $\mathbf{\Phi}^H$ from the right in \eqref{uln} and obtains the processed received signals
\begin{equation}
\begin{aligned}
\bar{\by}_{u,k}^N&=[\bY_{u}^N\mathbf{\Phi}^H]_k=\sqrt{p_u\alpha_g^k\beta_g^k}\bg_k \\
&+\sqrt{p_u\alpha_h^k\beta_h^k}\bh_k+\bar{\bn}_{u,k}, \quad k=1,\ldots, K/2,
\end{aligned}
\end{equation}
where $\bar{\bn}_{u,k}=[\bN_{u}\mathbf{\Phi}^H]_k\sim CN(0,\bI_M)$.
Then the MMSE channel estimate of $\bg_k$ for a user in the cell center is
\begin{equation}
\begin{aligned}
\hat{\bg}_k&=\frac{\sqrt{p_u\alpha_g^k\beta_g^k}}{p_u\alpha_g^k\beta_g^k+p_u\alpha_h^k\beta_h^k+1}\bar{\by}_{u,k}^N, \quad k=1,\ldots,K/2.
\end{aligned}
\end{equation}
The MMSE channel estimate of $\bh_k$ for a user at the cell edge is
\begin{equation}
\hat{\bh}_k=\frac{\sqrt{p_u\alpha_h^k\beta_h^k}}{p_u\alpha_g^k\beta_g^k+p_u\alpha_h^k\beta_h^k+1}\bar{\by}_{u,k}^N, \quad k=1,\ldots,K/2.
\end{equation}

We observe that $\hat{\bg}_k$ and $\hat{\bh}_k$ are parallel, thus the BS cannot distinguish between the channel ``direction'' of users that share the same pilot. This effect is a consequence of pilot contamination. Pilot contamination is a major issue in massive MIMO system, since it makes it hard for the BS from performing coherent beamforming only towards one of the users that share a pilot\cite{TM2010}. In contrast, if the same data is multicasted to multiple users, it is desirable to jointly beamform towards all of them. Pilot contamination is then useful to reduce the pilot overhead\cite{YMA2013}. In this paper, we will show how to exploit NOMA to send different data to the users that share a pilot.

One alternative way to utilize the uplink pilots is to estimate the linear combination $$\bw_k=\sqrt{\alpha_g^k\beta_g^k}\bg_k+\sqrt{\alpha_h^k\beta_h^k}\bh_k$$ of the channels. The MMSE estimate of $\bw_k$ for group $k$ is

\begin{equation}
\hat{\bw}_k=\frac{\sqrt{p_u}\alpha_g^k\beta_g^k+\sqrt{p_u}\alpha_h^k\beta_h^k}{p_u\alpha_g^k\beta_g^k+p_u\alpha_h^k\beta_h^k+1}\bar{\by}_{u,k}^N, \quad k=1,\ldots,K/2.
\end{equation}
Note that $\hat{\bw}_k$ is also parallel with $\hat{\bg}_k$ and $\hat{\bh}_k$. The choice of channel estimate does not matter because in either case the channel estimates are linearly scaled versions of the processed pilot signal. Hence the beamforming directions suggested by the estimates are the same by using any one of the estimators. Since we need to normalize the beamformer to satisfy the power constraint, the scaling disappears after normalization and therefore does not affect the rate.

\subsection{Interference-Limited Scenarios}
We can obtain a special case by assuming there is no noise during the uplink training, or equivalently that the uplink power $p_u$ goes to infinity. This yields as an upper bound on the performance of all the schemes. It is also a good approximation of the interference-limited scenario with high SNR, but large inter-user interference.

For Scheme-O, noise-free channel estimation implies that the channels are perfectly known at the BS, due to the fact that all users use orthogonal pilots in the uplink training, i.e.,
\begin{equation}
\hat{\bg}_k=\bg_k,\quad k=1,\ldots,K/2,
\end{equation}
and
\begin{equation}
\hat{\bh}_k=\bh_k,\quad k=1,\ldots,K/2.
\end{equation}
In contrast, for Scheme-N, the channel estimate at the BS will still be a linear combination of the channels because of the use of the same pilot in each group. The noise-free estimate of $\bw_k$ becomes

\begin{equation}
\hat{\bw}_k=\sqrt{\alpha_h^k\beta_h^k} \bh_k+\sqrt{\alpha_g^k\beta_g^k}\bg_k=\bw_k, \quad k=1,\ldots,K/2.
\end{equation}

\section{Performance Analysis}
In this section, we analyze the ergodic achievable rates of Scheme-O and Scheme-N under imperfect channel estimation. In wireless systems with fast fading channels, channel codes span many realizations of the fading process. Therefore the ergodic achievable rate is an appropriate metric to characterize the performance of coded systems in fast fading environment. It is commonly adopted in the multiuser MIMO literature, especially when the number of antennas is large. We make use of the UL channel estimates from Section \ref{channel_estimation} for downlink beamforming, by assuming perfect reciprocity between UL and DL. The channel estimation errors are taken into account in the ergodic achievable rate expressions. We separate the analysis into three parts, namely the cases with and without instantaneous DL CSI, and the case with estimated DL channel gains. The case with instantaneous downlink CSI is unobtainable in practice, and used only as a benchmark.

Note that the effective ergodic rate have a prelog penalty $\left(1-\frac{K}{2T}\right)$ for the case without DL pilots, where $T$ is the size of the CI. This penalty accounts for the loss from spending $\frac{K}{2T}$ of every CI to estimate the channels. For the case with DL pilots, the pre-log penalty is $\left(1-\frac{K}{T}\right)$.

\subsection{Downlink Signal Model}
Denote by $p_d$ the DL transmission power normalized by the noise variance. For Scheme-O, the received signal for user $k$ in the cell center is
\begin{equation}
y_{c,k}=\sqrt{p_d\beta_g^k}\bg_k^T \bx_g+n_{c,k}, \quad k=1,\ldots,K/2,
\end{equation}
and the received signal for user $k$ at the cell edge is

\begin{equation}
y_{e,k}=\sqrt{p_d\beta_h^k}\bh_k^T \bx_h+n_{e,k}, \quad k=1,\ldots,K/2,
\end{equation}
where $\bx_g$ ($\bx_h$) is the signal vector containing data for the cell center users (cell edge users), and $n_{c,k}$ ($n_{e,k}$) is the normalized i.i.d. zero mean unit variance complex Gaussian noise at the $k^{\rm{th}}$  user at the cell center (edge). Before transmission, each data symbol is multiplied with a beamforming vector as

\begin{equation}
\bx_g=\sum_{k=1}^{K/2} \bb_k \sqrt{\gamma_{k,g}^O}s_{k,g}
\end{equation}
for the users in the cell center and

\begin{equation}
\bx_h=\sum_{k=1}^{K/2} \ba_k \sqrt{\gamma_{k,h}^O}s_{k,h}
\end{equation}
for the users at cell edge. In the above equations $\gamma_{k,h}$ ($\gamma_{k,g}$) represents the non-negative power control coefficients for user $k$ at the cell edge (cell center), and $s_{k,h}$ ($s_{k,g}$) is the data symbol intended for user $k$ at the cell edge (cell center) which is zero mean and unit variance. The combined signal vectors $\bx_h$ and $\bx_g$ need to satisfy the power constraint $\mathbb{E}[\bx_h^H \bx_h]\leq 1$ and $\mathbb{E}[\bx_g^H \bx_g]\leq 1$.

In this work we focus on maximum ratio transmission (MRT) which is simple to implement and performs close to optimality in low SNR scenarios,
$$\bb_k=\frac{\hat{\bg}_k^{*}}{\sqrt{\mathbb{E}[||\hat{\bg}_k||^2]}}$$ for the cell center users and $$\ba_k=\frac{\hat{\bh}_k^{*}}{\sqrt{\mathbb{E}[||\hat{\bh}_k||^2]}}$$ for the cell edge users. With the normalized beamforming vectors, the power constraint becomes $\sum_{k=1}^{K/2} {\gamma_{k,g}^O}\leq 1$ and $\sum_{k=1}^{K/2} {\gamma_{k,h}^O}\leq 1$.

For Scheme-N, the received downlink signal for users in the cell center is

\begin{equation}
\begin{aligned}
y_{k,g}&=\sqrt{p_d\beta_g^k}\sum_{i=1}^{K/2} \bg_k^T\ba_i \sqrt{\gamma_{i,h}}s_{i,h}\\
&+\sqrt{p_d\beta_g^k}\sum_{i=1}^{K/2} \bg_k^T\bb_i \sqrt{\gamma_{i,g}}s_{i,g} +n_{k}, k=1,\ldots,K/2.
\end{aligned}
\end{equation}

Similarly, the received downlink signal for users at the cell edge can be written as

\begin{equation}
\begin{aligned}
y_{k,h}&=\sqrt{p_d\beta_h^k}\sum_{i=1}^{K/2} \bh_k^T\ba_i \sqrt{\gamma_{i,h}}s_{i,h}\\
&+\sqrt{p_d\beta_h^k}\sum_{i=1}^{K/2} \bh_k^T\bb_i \sqrt{\gamma_{i,g}}s_{i,g} +n_{k}, k=1,\ldots,K/2.
\end{aligned}
\end{equation}

In Scheme-N, where the BS knows only the linear combination of the channels for the users in the same NOMA group, it regards the estimate as the true channel for both users $(k,g)$ and $(k,h)$ since that is the best estimate available. The combined symbols from both terminals in the same group are weighted with the power control coefficients $\sqrt{\gamma_{k,h}}$ and $\sqrt{\gamma_{k,g}}$. The transmitted symbol in the $k^{th}$ NOMA group is hence $\sqrt{\gamma_{k,h}} s_{k,h} +\sqrt{\gamma_{k,g}} s_{k,g}$. Therefore the power constraint is $\sum_k \gamma_{k,h}+\sum_k\gamma_{k,g}\leq 1$. In this case we have the MRT beamforming vector with normalization

\begin{equation}
\ba_k=\bb_k=\frac{\hat{\bw}_k^{*}}{\sqrt{\mathbb{E}[\|\hat{\bw}_k\|^2]}}.
\end{equation}

\subsection{Performance With Perfect CSI at the Users}
In this subsection, we compute the ergodic achievable rate for the two schemes under the assumption that the DL pilots make perfect DL CSI available at the users. This assumes that DL pilots are sent in each CI and users perform channel estimation to obtain their own channel gain coefficients and the cross-channel gains between different users. The achievable rate is obtained by averaging over all sources of randomness in the channel and noise.

For Scheme-O, every user decodes its own data symbol by treating interference as noise. Since perfect CSI is available, an ergodic achievable rate of user $k$ with beamforming vector $\ba_1,\ldots, \ba_K$ and $\bb_1,\ldots, \bb_K$ can be computed using \cite[Section 2.3.5]{MLYN2016}

\begin{equation}\label{ratepoc}
R_{c,k}^{O}=\left(1-\frac{K}{T}\right)\eta\mathbb{E}\left[\log_2\left(1+\frac{p_d \beta_g^k\gamma_{k,g}^O|\bg_k^T\bb_k|^2}{p_d\beta_g^k\sum_j\gamma_{j,g}^O|\bg_k^T\bb_j|^2+1}\right)\right]
\end{equation}
for the users in the cell center and
\begin{equation}\label{ratepoe}
R_{e,k}^{O}=\left(1-\frac{K}{T}\right)(1-\eta)\mathbb{E}\left[\log_2\left(1+\frac{p_d \beta_h^k\gamma_{k,h}^O|\bh_k^T\ba_k|^2}{p_d\beta_h^k\sum_j\gamma_{j,h}^O|\bh_k^T\ba_j|^2+1}\right)\right]
\end{equation}
for the users at the cell edge.

The ergodic achievable rates are measured in b/s/Hz, and they can be achieved by using Gaussian signaling and codewords that span over all channel realizations. The pre-log factors account for the loss in achievable rate due to the fact that each user is only scheduled for a fraction of the CIs, in time or frequency.

For Scheme-N, recall that we name the $k^{th}$ user at the cell edge as $(k,h)$ and the $k^{th}$ user at the cell center as $(k,g)$. The instantaneous SINR of $s_{k,h}$ of user $(k,g)$ is

\begin{equation}
\mathrm{SINR}_{k,g}=\frac{p_d \beta_g^k\gamma_{k,h}|\bg_k^T\ba_k|^2}{p_d\beta_g^k\sum_{j\neq k}\gamma_{j,h}|\bg_k^T\ba_j|^2+p_d\beta_g^k\sum_{j}\gamma_{j,g}|\bg_k^T\bb_j|^2+1}
\end{equation}
and similarly the instantaneous SINR of $s_{k,h}$ at user $(k,h)$ can be written as
\begin{equation}
\mathrm{SINR}_{k,h}=\frac{p_d \beta_h^k\gamma_{k,h}|\bh_k^T\ba_k|^2}{p_d\beta_h^k\sum_{j\neq k}\gamma_{j,h}|\bh_k^T\ba_j|^2+p_d\beta_h^k\sum_{j}\gamma_{j,g}|\bh_k^T\bb_j|^2+1}.
\end{equation}

The condition that user $(k,g)$ can decode the data intended for user $(k,h)$ is that the ergodic achievable rate of $s_{k,h}$ at user $(k,g)$ is no less than the ergodic achievable rate of $s_{k,h}$ at user $(k,h)$, which is explicitly

\begin{equation}
\mathbb{E}[\log_2(1+\mathrm{SINR}_{k,g})]\geq \mathbb{E}[\log_2(1+\mathrm{SINR}_{k,h})].
\end{equation}

When this condition does not hold, we need to lower the data rate to user $(k,h)$ such that it can be decoded at user $(k,g)$. This can be done by choosing
\begin{equation}
R_{k,h}^{NP}=\min \left(\mathbb{E}[\log_2(1+\mathrm{SINR}_{k,g})], \mathbb{E}[\log_2(1+\mathrm{SINR}_{k,h})]\right).
\end{equation}
Since $\mathbb{E}[\log_2(1+\mathrm{SINR}_{k,h})]$ is an achievable rate for user $(k,h)$, from an information-theoretic perspective any rate that is lower than that is also achievable. Therefore by transmitting with the chosen $R_{k,h}^{NP}$ both users are able to decode the data.

In practice, for (28) to hold we just need to properly control the pilot powers such that (28) holds. Then user $(k,g)$ gathers all received signals over all channel realizations (coherence intervals) and decodes the data for user $(k,h)$. Notice that the SIC is done after the whole codeword is decoded, and not performed in every CI. Therefore it is not a problem if the instantaneous SINR is lower at user $(k,g)$, as long as (28) is satisfied in the long term.

In the typical scenarios of $\beta_h^k \ll \beta_g^k$, there is a wide range of possible choices of power control parameters on the pilots available to satisfy (28). With any choice of power control satisfying (28) we transmit with the super-position coding scheme such that user $(k,h)$ decodes the signal $s_{k,h}$ from $y_{k,h}$ by treating the signal from user $(k,g)$ as noise. Then user $(k,g)$ performs successive interference cancellation such that it first decodes $s_{k,h}$ from $y_{k,g}$ and then subtracts $\sqrt{p_d\beta_g^k}\bg_k^T\ba_k\sqrt{\gamma_{k,h}}s_{k,h}$ from $y_{k,g}$ and decodes $s_{k,g}$ afterwards.

With the superposition coding scheme, the achievable rate of user $(k,g)$ is given in \eqref{ratepnc}
\begin{figure*}
\begin{equation}\label{ratepnc}
R_{k,g}^{NP}=\left(1-\frac{K}{T}\right)\mathbb{E}\left[\log_2\left(1+\frac{p_d \beta_g^k\gamma_{k,h}|\bg_k^T\ba_k|^2}{p_d\beta_g^k\sum_{j\neq k}\gamma_{j,h}|\bg_k^T\ba_j|^2+p_d\beta_g^k\sum_{j\neq k}\gamma_{j,g}|\bg_k^T\bb_j|^2+1}\right)\right]
\end{equation}
\hrulefill
\end{figure*}
and the achievable rate of user $(k,h)$ is given in \eqref{ratepne} on top of next page.
\begin{figure*}
\begin{equation}\label{ratepne}
R_{k,h}^{NP}=\left(1-\frac{K}{T}\right)\mathbb{E}\left[\log_2\left(1+\frac{p_d \beta_h^k\gamma_{k,h}|\bh_k^T\ba_k|^2}{p_d\beta_h^k\sum_{j\neq k}\gamma_{j,h}|\bh_k^T\ba_j|^2+p_d\beta_h^k\sum_{j}\gamma_{j,g}|\bh_k^T\bb_j|^2+1}\right)\right]
\end{equation}
\hrulefill
\end{figure*}

It is worth noticing that when $\alpha_g^k=\gamma_{k,g}=0~\forall k$ and $\alpha_h^k=1$, one can obtain $R_{k,h}^{N}=R_{e,k}^{O}$ with $\eta=1$. Similarly when $\alpha_h^k=\gamma_{k,h}=0~, \forall k$ and $\alpha_g^k=1$, one can obtain $R_{k,g}^{N}=R_{c,k}^{O}$ with $\eta=0$. By using time-sharing between these two extremes, we obtain all the ergodic achievable rates that Scheme-O can attain. This shows that Scheme-N is more general than the traditional scheme with orthogonal access.

\subsection{Performance Without Downlink CSI}
In this section we investigate the case when instantaneous DL CSI is not available, however we assume the channel statistics are known by all parties. This corresponds to the case when no DL pilots are sent and serves as a lower bound on the performance of all the schemes with estimated DL channels.
In this case users utilize the long term statistics as the channel gain and decode the signals, that is, they take the statistical average of the effective gain as an estimate of that gain. Then the achievable rate is obtained by gathering all the symbols over different channel realizations and decoding the signal.

Assume the BS uses the estimated CSI for beamforming to all terminals. Since we are considering MRT beamforming, $\ba_k$ and $\bb_k$ are scaled versions of the channel estimate $\hat{\bw}_k$ which is a scaled version of the processed pilots $\bar{\by}_{u,k}^{N}$. Then the beamforming vector is $\ba_k=\bb_k=c_k\bar{\by}_{u,k}^{N*}$ where the normalizing constant $c_k$ that meets the power constraint can be calculated as
\begin{equation}
c_k=\frac{1}{\sqrt{\mathds{E}[\|\bar{\by}_{u,k}^{N}\|^2]}}=\frac{1}{\sqrt{(p_u\alpha_h^k\beta_h^k+p_u\alpha_g^k\beta_g^k+1)M}}.
\end{equation}
Therefore the received signal at user $(k,g)$ is
\begin{equation}
\begin{aligned}
y_{k,g}&=c_k\sqrt{\beta_g^k}\bg_k^T\bar{\by}_{u,k}^{N*}\sqrt{p_d\gamma_{k,h}} s_{k,h} \\
&+c_k\sqrt{\beta_g^k}\bg_k^T\bar{\by}_{u,k}^{N*}\sqrt{p_d\gamma_{k,g}} s_{k,g}+I_{k,g}+ n_{k,g},
\end{aligned}
\end{equation}
where
\begin{equation}
\begin{aligned}
I_{k,g}&=\sqrt{\beta_g^k}\sum_{j\neq k}c_j\bg_k^T\bar{\by}_{u,j}^{N*}\sqrt{p_d\gamma_{j,h}} s_{j,h}\\
&+\sqrt{\beta_h^k}\sum_{j\neq k}c_j\bg_k^T\bar{\by}_{u,j}^{N*}\sqrt{p_d\gamma_{j,g}} s_{j,g}
\end{aligned}
\end{equation}
is the interference from other groups of users.
Similarly, the received signal at user $(k,g)$ is
\begin{equation}
\begin{aligned}
y_{k,h}&=c_k\sqrt{\beta_h^k}\bh_k^T\bar{\by}_{u,k}^{N*}\sqrt{p_d\gamma_{k,h}} s_{k,h}\\
&+c_k\sqrt{\beta_h^k}\bh_k^T\bar{\by}_{u,k}^{N*}\sqrt{p_d\gamma_{k,g}} s_{k,g}+I_{k,h}+ n_{k,h},
\end{aligned}
\end{equation}
where
\begin{equation}
\begin{aligned}
I_{k,h}&=\sqrt{\beta_h^k}\sum_{j\neq k}c_j\bh_k^T\bar{\by}_{u,j}^{N*}\sqrt{p_d\gamma_{j,h}} s_{j,h}\\
&+\sqrt{\beta_h^k}\sum_{j\neq k}c_j\bh_k^T\bar{\by}_{u,j}^{N*}\sqrt{p_d\gamma_{j,g}} s_{j,g}
\end{aligned}
\end{equation}
is the interference from other groups of users.

Now we make use of the channel statistics to write the received signal at terminal $(k,h)$ as
\begin{equation}\label{receivedsignal}
y_{k,h}=\mathbb{E}\left[c_k\sqrt{\beta_h^k}\bh_k^T\bar{\by}_{u,k}^{N*}\sqrt{p_d\gamma_{k,h}}\right] s_{k,h}+z_{k,h}
\end{equation}
where we have introduced the following effective noise term
\begin{equation}
\begin{aligned}
z_{k,h}&=\left(c_k\sqrt{\beta_h^k}\bh_k^T\bar{\by}_{u,k}^{N*}\sqrt{p_d\gamma_{k,h}}-\mathbb{E}\left[c_k\sqrt{\beta_h^k}\bh_k^T\bar{\by}_{u,k}^{N*}\sqrt{p_d\gamma_{k,h}}\right]\right)\\
&s_{k,h}+c_k\sqrt{\beta_h^k}\bh_k^T\bar{\by}_{u,k}^{N*}\sqrt{p_d\gamma_{k,g}} s_{k,g}+I_{k,h}+ n_{k,h}.
\end{aligned}
\end{equation}
It can be easily verified that $z_{k,h}$ is uncorrelated with the signal term in \eqref{receivedsignal}. Therefore \eqref{receivedsignal} can be regarded as an equivalent scalar channel with deterministic known gain and additive uncorrelated noise. Using the fact that additive Gaussian noise is the worst case uncorrelated noise\cite[Section 2.3.2]{MLYN2016}, the following rate is achievable for user $(k,h)$:
\begin{proposition}\label{iprateh}
The following ergodic rate is achievable for user $(k,h)$ with Scheme-N:
\begin{equation}\label{rateimne}
R_{k,h}^{Nip}=\left(1-\frac{K}{2T}\right)\log_2\left(1+\frac{p_d\lambda_{k,h}\beta_h^k\gamma_{k,h}M}{p_d\lambda_{k,h}\beta_h^k\gamma_{k,g}M+p_d\beta_h^k+1}\right),
\end{equation}
where $\lambda_{k,h}$ is defined as
\begin{equation}
\lambda_{k,h}=\frac{p_u\alpha_h^k\beta_h^k}{p_u\alpha_h^k\beta_h^k+p_u\alpha_g^k\beta_g^k+1}.
\end{equation}
\end{proposition}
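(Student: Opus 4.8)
The plan is to apply the standard ``use-and-then-forget'' bound: having written \eqref{receivedsignal} as a deterministic-gain scalar channel $y_{k,h}=G\,s_{k,h}+z_{k,h}$ with $\mathbb{E}[|s_{k,h}|^2]=1$ and $z_{k,h}$ uncorrelated with $s_{k,h}$, the worst-case uncorrelated-noise argument already invoked just above makes the rate $(1-\frac{K}{2T})\log_2(1+|G|^2/\mathbb{E}[|z_{k,h}|^2])$ achievable, the pre-log factor charging for the $\frac{K}{2}$ uplink pilot symbols. So the only work is (i) to evaluate $G=\mathbb{E}[c_k\sqrt{\beta_h^k}\bh_k^T\bar{\by}_{u,k}^{N*}\sqrt{p_d\gamma_{k,h}}]$, (ii) to evaluate $\mathbb{E}[|z_{k,h}|^2]$, and (iii) to check that $|G|^2/\mathbb{E}[|z_{k,h}|^2]$ is at least the SINR appearing in \eqref{rateimne} (with equality under full-power transmission).

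For step (i) I would substitute $\bar{\by}_{u,k}^{N}=\sqrt{p_u\alpha_g^k\beta_g^k}\,\bg_k+\sqrt{p_u\alpha_h^k\beta_h^k}\,\bh_k+\bar{\bn}_{u,k}$ into $G$; since $\bg_k$, $\bh_k$, $\bar{\bn}_{u,k}$ are independent and zero mean, only the $\bh_k^T\bh_k^{*}=\|\bh_k\|^2$ term contributes, and $\mathbb{E}[\|\bh_k\|^2]=M$ gives $G=c_k\sqrt{p_d\gamma_{k,h}\beta_h^k}\sqrt{p_u\alpha_h^k\beta_h^k}\,M$. Inserting $c_k^2=1/[(p_u\alpha_h^k\beta_h^k+p_u\alpha_g^k\beta_g^k+1)M]$ collapses $|G|^2$ to $p_d\lambda_{k,h}\beta_h^k\gamma_{k,h}M$, the numerator of \eqref{rateimne}.

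For step (ii) I would split $z_{k,h}$ into its four contributions and verify they are pairwise uncorrelated, so that $\mathbb{E}[|z_{k,h}|^2]$ is the sum of their variances. The desired-signal fluctuation $\big(c_k\sqrt{\beta_h^k}\bh_k^T\bar{\by}_{u,k}^{N*}\sqrt{p_d\gamma_{k,h}}-G\big)s_{k,h}$ has variance $c_k^2\beta_h^k p_d\gamma_{k,h}\,\mathrm{Var}(\bh_k^T\bar{\by}_{u,k}^{N*})$, where I would compute $\mathrm{Var}(\bh_k^T\bar{\by}_{u,k}^{N*})=(p_u\alpha_g^k\beta_g^k+p_u\alpha_h^k\beta_h^k+1)M$ using the Gaussian fourth moment $\mathbb{E}[\|\bh_k\|^4]=M(M+1)$; the $c_k^2$ then cancels the bracket and this term leaves $\beta_h^k p_d\gamma_{k,h}$. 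The intra-group term $c_k\sqrt{\beta_h^k}\bh_k^T\bar{\by}_{u,k}^{N*}\sqrt{p_d\gamma_{k,g}}\,s_{k,g}$ is \emph{not} centered, so it contributes $c_k^2\beta_h^k p_d\gamma_{k,g}\,\mathbb{E}[|\bh_k^T\bar{\by}_{u,k}^{N*}|^2]=\beta_h^k p_d\gamma_{k,g}(1+\lambda_{k,h}M)$ — this is where the denominator term $p_d\lambda_{k,h}\beta_h^k\gamma_{k,g}M$ originates. For the inter-group interference $I_{k,h}$, each summand with index $j\neq k$ uses that $\bar{\by}_{u,j}^{N*}$ is independent of $\bh_k$, whence $\mathbb{E}[|\bh_k^T\bar{\by}_{u,j}^{N*}|^2]=\mathbb{E}[\|\bar{\by}_{u,j}^{N}\|^2]=1/c_j^2$, so the $c_j^2$ cancels and the summand contributes $\beta_h^k p_d\gamma_{j,h}$ (resp.\ $\beta_h^k p_d\gamma_{j,g}$), giving $\mathrm{Var}(I_{k,h})=\beta_h^k p_d\sum_{j\neq k}(\gamma_{j,h}+\gamma_{j,g})$; finally $\mathrm{Var}(n_{k,h})=1$. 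Adding and regrouping, $\mathbb{E}[|z_{k,h}|^2]=p_d\lambda_{k,h}\beta_h^k\gamma_{k,g}M+\beta_h^k p_d\sum_j(\gamma_{j,h}+\gamma_{j,g})+1$, and the power constraint $\sum_j(\gamma_{j,h}+\gamma_{j,g})\le 1$ (an equality under full power) yields $\mathbb{E}[|z_{k,h}|^2]\le p_d\lambda_{k,h}\beta_h^k\gamma_{k,g}M+p_d\beta_h^k+1$; monotonicity of $\log_2(1+\cdot)$ then delivers \eqref{rateimne}.

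The main obstacle is the bookkeeping in step (ii): keeping straight which pieces of $z_{k,h}$ are zero mean (only the desired-signal fluctuation) and which retain their full second moment (the intra-group $s_{k,g}$ term, whose $M^2$-order part survives and produces the $\lambda_{k,h}M$ factor), checking that all cross-correlations vanish — for the two same-group terms this relies on independence and zero mean of $s_{k,h},s_{k,g}$, not on independence of the channel factors — and correctly evaluating the Gaussian fourth moment $\mathbb{E}[\|\bh_k\|^4]=M(M+1)$. The rest is routine second-moment algebra exploiting independence of $\bg_k$, $\bh_k$, $\bar{\bn}_{u,k}$ and the data symbols.
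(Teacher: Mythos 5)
Your proposal is correct and follows essentially the same route as the paper's Appendix A: the worst-case-uncorrelated-noise capacity bound applied to the scalar channel \eqref{receivedsignal}, with the numerator evaluated via $\mathbb{E}[\|\bh_k\|^2]=M$ and the denominator obtained by summing the variances of the four uncorrelated components of $z_{k,h}$. Your bookkeeping is in fact more explicit than the paper's (which silently absorbs the $p_d\gamma$ factors and the full-power condition $\sum_j(\gamma_{j,h}+\gamma_{j,g})=1$ into the step yielding $p_d\beta_h^k$), but the computation and the final expression agree.
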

\begin{proof} The proof is given in Appendix \ref{proofiprateh}.
\end{proof}
We define
\begin{equation}
\lambda_{k,g}=\frac{p_u\alpha_g^k\beta_g^k}{p_u\alpha_h^k\beta_h^k+p_u\alpha_g^k\beta_g^k+1}
\end{equation}
to quantify the channel estimation quality for the following discussion. Under the condition $\alpha_h^k\beta_h^k\leq \alpha_g^k\beta_g^k$, the effective SINR of the signal $s_{k,h}$ at user $(k,g)$ is greater than the effective SINR of the signal $s_{k,h}$ at user $(k,h)$, i.e.,
\begin{equation}
\frac{p_d\lambda_{k,h}\beta_h^k\gamma_{k,h}M}{p_d\lambda_{k,h}\beta_h^k\gamma_{k,g}M+p_d\beta_h^k+1}\leq\frac{p_d\lambda_{k,g}\beta_g^k\gamma_{k,h}M}{p_d\lambda_{k,g}\beta_g^k\gamma_{k,g}M+p_d\beta_g^k+1}.
\end{equation}
Therefore we can use NOMA where user $(k,g)$ decodes data from user $(k,h)$ and then subtracts it from the received signal $y_{k,g}$.
From the sufficient condition $\alpha_h^k\beta_h^k\leq \alpha_g^k\beta_g^k$ we see that it is better to let the user with larger large-scale fading coefficient perform successive interference cancellation as the condition is easier to satisfy.
\color{black}
We have the new received signal
\begin{equation}
\begin{aligned}
\bar{y}_{k,g}&=y_{k,g}-\mathds{E}\left[c_k\sqrt{\beta_g^k}\bg_k^T\bar{\by}_{u,k}^{N*}\right]\sqrt{p_d\gamma_{k,h}}s_{k,h}\\
&=\mathds{E}\left[c_k\sqrt{\beta_g^k}\bg_k^T\bar{\by}_{u,k}^{N*}\right]\sqrt{p_d\gamma_{k,g}}s_{k,g}\\
&+\left(c_k\sqrt{\beta_g^k}\bg_k^T\bar{\by}_{u,k}^{N*}-\mathds{E}\left[c_k\sqrt{\beta_g^k}\bg_k^T\bar{\by}_{u,k}^{N*}\right]\right)\sqrt{p_d\gamma_{k,h}}s_{k,h}\\
&+\left(c\sqrt{\beta_g^k}\bg_k^T\bar{\by}_{u,k}^{N*}-\mathds{E}\left[c_k\sqrt{\beta_g^k}\bg_k^T\bar{\by}_{u,k}^{N*}\right]\right)\sqrt{p_d\gamma_{k,g}} s_{k,g}\\
&+I_{k,g}+n_{k,g}.
\end{aligned}
\end{equation}

We can similarly write the effective noise as
\begin{equation}
\begin{aligned}
z_{k,g}&=\left(c_k\sqrt{\beta_g^k}\bg_k^T\bar{\by}_{u,k}^{N*}-\mathds{E}\left[c_k\sqrt{\beta_g^k}\bg_k^T\bar{\by}_{u,k}^{N*}\right]\right)\sqrt{p_d\gamma_{k,h}}s_{k,h}\\
&+\left(c\sqrt{\beta_g^k}\bg_k^T\bar{\by}_{u,k}^{N*}-\mathds{E}\left[c_k\sqrt{\beta_g^k}\bg_k^T\bar{\by}_{u,k}^{N*}\right]\right)\sqrt{p_d\gamma_{k,g}} s_{k,g}\\
&+I_{k,g}+n_{k,g}.
\end{aligned}
\end{equation}

\begin{proposition}\label{iprateg}
The following ergodic rate is achievable for user $(k,g)$ with Scheme-N:
\begin{equation}\label{rateimnc}
R_{k,g}^{Nip}=\left(1-\frac{\tau}{2T}\right)\log_2\left(1+\frac{\lambda_{k,g}\beta_g^k\gamma_{k,g}M}{p_d\beta_g^k+1}\right).
\end{equation}
\end{proposition}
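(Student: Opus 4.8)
The plan is to follow the same route as the proof of Proposition~\ref{iprateh}, now applied to the post-cancellation signal $\bar{y}_{k,g}$ and its effective-noise decomposition $z_{k,g}$ written out just before the statement. First I would record that $\bar{y}_{k,g}$ has the form (deterministic gain)$\times s_{k,g}+z_{k,g}$ and verify that $z_{k,g}$ is uncorrelated with $s_{k,g}$: the two in-group residuals are $\bigl(c_k\sqrt{\beta_g^k}\bg_k^T\bar{\by}_{u,k}^{N*}-\mathbb{E}[c_k\sqrt{\beta_g^k}\bg_k^T\bar{\by}_{u,k}^{N*}]\bigr)\sqrt{p_d\gamma}\,s$ with $s\in\{s_{k,h},s_{k,g}\}$, and since $\bg_k$, the processed pilot $\bar{\by}_{u,k}^N$ and all data symbols are mutually independent and the symbols are zero mean and unit variance, the bracketed factor has zero mean and is independent of $s_{k,g}$, so these residuals have zero correlation with $s_{k,g}$; the remaining pieces $I_{k,g}$ and $n_{k,g}$ involve only symbols and noise independent of $s_{k,g}$. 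Hence $\bar{y}_{k,g}$ is an equivalent deterministic scalar channel with additive uncorrelated noise, and the worst-case-uncorrelated-noise argument \cite[Section~2.3.2]{MLYN2016} — exactly as used for user $(k,h)$ — makes the rate $\bigl(1-\tfrac{\tau}{2T}\bigr)\log_2\!\bigl(1+|g_{k,g}|^2 p_d\gamma_{k,g}/\mathrm{Var}(z_{k,g})\bigr)$ achievable, with $g_{k,g}=\mathbb{E}[c_k\sqrt{\beta_g^k}\bg_k^T\bar{\by}_{u,k}^{N*}]$ and the prelog accounting for the $K/2$ uplink pilot symbols and the absence of DL pilots.

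The rest is evaluating the two moments. For the signal gain, substituting $\bar{\by}_{u,k}^N=\sqrt{p_u\alpha_g^k\beta_g^k}\,\bg_k+\sqrt{p_u\alpha_h^k\beta_h^k}\,\bh_k+\bar{\bn}_{u,k}$, using $\mathbb{E}[\|\bg_k\|^2]=M$ and dropping the cross terms by independence and zero mean gives $g_{k,g}=c_k\sqrt{\beta_g^k}\sqrt{p_u\alpha_g^k\beta_g^k}\,M$; inserting the explicit formula for $c_k$ yields $|g_{k,g}|^2=\lambda_{k,g}\beta_g^k M$, so the useful power is $p_d\gamma_{k,g}\lambda_{k,g}\beta_g^k M$. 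For the variance I would process $z_{k,g}$ term by term, the four terms being pairwise uncorrelated (distinct data symbols, independent thermal noise, zero-mean channel factors). Each in-group residual contributes $p_d\gamma\cdot\mathrm{Var}\!\bigl(c_k\sqrt{\beta_g^k}\bg_k^T\bar{\by}_{u,k}^{N*}\bigr)$; expanding and using $\mathrm{Var}(\|\bg_k\|^2)=M$ together with $\mathbb{E}|\bg_k^T\bh_k^*|^2=\mathbb{E}|\bg_k^T\bar{\bn}_{u,k}^*|^2=M$ collapses $c_k^2\beta_g^k M(p_u\alpha_g^k\beta_g^k+p_u\alpha_h^k\beta_h^k+1)$ to exactly $\beta_g^k$, so the in-group residuals give $p_d\beta_g^k(\gamma_{k,h}+\gamma_{k,g})$. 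For $I_{k,g}$, $\bg_k$ is independent of $\bar{\by}_{u,j}^N$ when $j\neq k$, hence $\mathbb{E}|\bg_k^T\bar{\by}_{u,j}^{N*}|^2=\mathbb{E}\|\bar{\by}_{u,j}^N\|^2=M(p_u\alpha_g^j\beta_g^j+p_u\alpha_h^j\beta_h^j+1)=1/c_j^2$, which cancels the $c_j^2$ and leaves $\mathrm{Var}(I_{k,g})=p_d\beta_g^k\sum_{j\neq k}(\gamma_{j,h}+\gamma_{j,g})$. Adding $\mathrm{Var}(n_{k,g})=1$, one gets $\mathrm{Var}(z_{k,g})=p_d\beta_g^k\sum_{j}(\gamma_{j,h}+\gamma_{j,g})+1=p_d\beta_g^k+1$ under the sum-power constraint $\sum_j(\gamma_{j,h}+\gamma_{j,g})=1$. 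Dividing the useful power by $\mathrm{Var}(z_{k,g})$ gives the SINR in the statement.

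The main obstacle — indeed the only step that is not bookkeeping — is the second-moment computation: getting $\mathrm{Var}(\|\bg_k\|^2)=M$ rather than $M^2$, and recognizing that the normalization $c_j=1/\sqrt{\mathbb{E}\|\bar{\by}_{u,j}^N\|^2}$ is precisely what reduces every interference contribution (the in-group residual left after cancelling the \emph{mean} gain, and each out-of-group term) to $p_d\beta_g^k$ times that group's power share, so that the whole interference term telescopes into the single clean term $p_d\beta_g^k$ once the DL power budget is used in full. The vanishing cross-correlations, the substitution of the explicit $c_k$, the simplification through $\lambda_{k,g}$, and the prelog are all routine.
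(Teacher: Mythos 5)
Your proposal is correct and follows essentially the same route as the paper's Appendix~B: the worst-case-uncorrelated-additive-noise bound applied to the post-cancellation signal, with the numerator evaluating to $p_d\lambda_{k,g}\beta_g^k\gamma_{k,g}M$ and the variance of $z_{k,g}$ collapsing to $p_d\beta_g^k+1$; you merely spell out the second-moment computations (e.g.\ $\mathrm{Var}(\|\bg_k\|^2)=M$ and the cancellation by $c_j$) that the paper leaves implicit. Note in passing that your derivation, like the paper's own appendix, yields a factor $p_d$ in the numerator and a prelog $\bigl(1-\tfrac{K}{2T}\bigr)$, so the proposition as printed (missing $p_d$, undefined $\tau$) contains typos rather than your proof containing an error.
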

\begin{proof} The proof is given in Appendix \ref{proofiprateg}.
\end{proof}
From the ergodic rate expressions, we can observe that the signal terms are proportional to $M$, which is the array gain from coherent beamforming. Moreover, we observe that the total interference from other groups of users is a constant that only depends the user's own large-scale fading, but not on the number of antennas or channel estimation quality. Therefore the only parameters that affect the rate are the power control parameters and the uplink channel estimation quality. Adding more groups of users in Scheme-N will only change the amount of power that is allocated to each group, but not the total interference. Each user at the cell edge is affected by coherent interference from the signal intended for the cell center user in its group. However, for the users in the cell center, coherent interference disappears in the successive interference cancellation and the only effect of the pilot contamination is the degraded channel estimation quality.

Using similar calculations, we obtain the ergodic achievable rate expressions for Scheme-O. For users in the cell center, we have
\begin{equation}\label{rateimoc}
\begin{aligned}
R_{c,k}^{Oip}&=\left(1-\frac{K}{2T}\right)\eta\log_2\left(1+\frac{\lambda_{k,g}^O\beta_g^k\gamma_{k,g}^OM}{p_d\beta_g^k+1}\right),\\
& k=1,\ldots, K/2
\end{aligned}
\end{equation}
where
\begin{equation}
\lambda_{k,g}^O=\frac{p_u\beta_g^k}{p_u\beta_g^k+1}.
\end{equation}
For users at the cell edge, we have
\begin{equation}\label{rateimoe}
\begin{aligned}
R_{e,k}^{Oip}&=\left(1-\frac{K}{2T}\right)(1-\eta)\log_2\left(1+\frac{\lambda_{k,h}^O\gamma_{k,h}^O M}{p_d\beta_h^k+1}\right),\\
& k=1,\ldots, K/2
\end{aligned}
\end{equation}
where
\begin{equation}
\lambda_{k,h}^O=\frac{p_u\beta_h^k}{p_u\beta_h^k+1}.
\end{equation}

As in the case with perfect CSI at the users, when we set $\alpha_h^k=\gamma_{k,h}=0, \forall k$ and $\alpha_g^k=1$ in Scheme-N we get the achievable rate of the users in the cell center in Scheme-O with $\eta=1$. Setting $\alpha_h^k=\gamma_{k,h}=0, \forall k$ and $\alpha_h^k=1$ we get the achievable rate of the users at the cell edge in Scheme-O with $\eta=0$. By using time-sharing between these two extremes, we obtain all the ergodic achievable rates that Scheme-O can attain.

\subsection{Performance With Estimated Downlink CSI}
DL CSI does not come for free. In practice some form of estimation of the beamformed channel gain is usually needed. In this subsection, we investigate the performance of the two schemes when we send DL (beamformed) pilots \cite{NL2016} for the channel estimation. For Scheme-O every user receives its own orthogonal pilot. For Scheme-N, since we are using the same beamformer for the pair of users in every group $k$, only one downlink pilot is needed for every pair of users. In this case the users estimate their effective channel gain and perform a form of ``equalization'' using the estimated channel gain (see below for the details).

We denote the channel gain at user $(k,h)$ as $f_{k,h}\triangleq \bh_k^T\ba_k$. Then the received pilot at each of these users is
\begin{equation}
y_{dpk,h}=f_{k,h} \sqrt{p_d\beta_h^k}+ n_{dpk,h}, k=1,\ldots,K/2.
\end{equation}
Assuming LMMSE estimation\cite{Kay1993} at the user, we obtain the estimate
\begin{equation}
\begin{aligned}
\hat{f}_{k,h}&=\mathbb{E}[f_{k,h}]\\
&+\frac{\sqrt{\beta_h^kp_d}\mathrm{Var}[f_{k,h}]}{\beta_h^kp_d\mathrm{Var}[f_{k,h}]+1}\left(y_{dpk,h}-
\sqrt{\beta_h^kp_d}\mathbb{E}[f_{k,h}]\right),
\end{aligned}
\end{equation}
of the channel gain where
\begin{equation}
\begin{aligned}
\mathbb{E}[f_{k,h}]&=\sqrt{M\lambda_{k,h}},\\
\mathrm{Var}[f_{k,h}]&=1.
\end{aligned}
\end{equation}
The estimation quality will improve with $M$ as the mean of the channel gain is increasing with $M$ while the variance is constant.

Similarly, denote the channel gain at user $(k,g)$ as $f_{k,g}\triangleq \bg_k^T\bb_k$. The received pilot at each of these user is
\begin{equation}
\begin{aligned}
y_{dpk,g}&=f_{k,g} \sqrt{p_d\beta_g^k}+ n_{dpk,g}, k=1,\ldots,K/2.
\end{aligned}
\end{equation}
Applying LMMSE estimation yields the estimate
\begin{equation}
\begin{aligned}
\hat{f}_{k,g}&=\mathbb{E}[f_{k,g}]\\
&+\frac{\sqrt{\beta_g^kp_d}\mathrm{Var}[f_{k,g}]}{\beta_g^kp_d\mathrm{Var}[f_{k,g}]+1}\left(y_{dpk,g}-
\sqrt{\beta_g^kp_d}\mathbb{E}[f_{k,g}]\right),
\end{aligned}
\end{equation}
where
\begin{equation}
\begin{aligned}
\mathbb{E}[f_{k,g}]&=\sqrt{M\lambda_{k,g}},\\
\mathrm{Var}[f_{k,g}]&=1.
\end{aligned}
\end{equation}
With these estimates of the channel gains, we first divide the received signal at user $(k,h)$ by the channel estimate. This can be seen as a form of equalization, and ideally the ratio $\frac{f_{k,g}}{\hat{f}_{k,g}}$ is one. Then we use the same method as above to obtain the achievable rate of user $(k,h)$ in \eqref{ratekhdp} on top of next page.
\begin{figure*}
\begin{equation}\label{ratekhdp}
R_{k,h}^{Ndp}=\left(1-\frac{K}{T}\right)\log_2\left(1+\frac{p_d\beta_h^k\gamma_{k,h}\left|\mathbb{E}\left[\frac{f_{k,h}}{\hat{f}_{k,h}}\right]\right|^2}{p_d\beta_h^k\gamma_{k,h} \mathrm{Var}\left[\frac{f_{k,h}}{\hat{f}_{k,h}}\right]+p_d\beta_h^k\gamma_{k,g}\left|\mathbb{E}\left[\frac{f_{k,h}}{\hat{f}_{k,h}}\right]\right|^2+
\mathbb{E}\left[\left|\frac{I_{k,h}}{\hat{f}_{k,h}}\right|^2\right]+\mathbb{E}\left[\left|\frac{1}{\hat{f}_{k,h}}\right|^2\right]}\right)
\end{equation}
\hrulefill
\end{figure*}
Similarly for user $(k,g)$, an achievable rate is given in \eqref{ratekgdp} on top of next page.
\begin{figure*}
\begin{equation}\label{ratekgdp}
R_{k,g}^{Ndp}=\left(1-\frac{K}{T}\right)\log_2\left(1+\frac{p_d\beta_g^k\gamma_{k,g}\left|\mathbb{E}\left[\frac{f_{k,g}}{\hat{f}_{k,g}}\right]\right|^2}{p_d\beta_g^k \mathrm{Var}\left[\frac{f_{k,g}}{\hat{f}_{k,g}}\right]+\mathbb{E}\left[\left|\frac{I_{k,g}}{\hat{f}_{k,g}}\right|^2\right]+\mathbb{E}\left[\left|\frac{1}{\hat{f}_{k,g}}\right|^2\right]}\right)
\end{equation}
\hrulefill
\end{figure*}

For Scheme-O, similar techniques can be applied to obtain the achievable rate for the users in the cell center given in \eqref{ratedlocdp} on top of next page.
\begin{figure*}
\begin{equation}\label{ratedlocdp}
R_{c,k}^{Odp}=\left(1-\frac{K}{T}\right)\eta\log_2\left(1+\frac{p_d\beta_g^k\gamma^O_{k,g}\left|\mathbb{E}\left[\frac{f_{k,g}^O}{\hat{f}_{k,g}^O}\right]\right|^2}{p_d\beta_g^k \mathrm{Var}\left[\frac{f_{k,g}^O}{\hat{f}_{k,g}^O}\right]+\mathbb{E}\left[\left|\frac{I^O_{k,g}}{\hat{f}^O_{k,g}}\right|^2\right]+\mathbb{E}\left[\left|\frac{1}{\hat{f}^O_{k,g}}\right|^2\right]}\right)
\end{equation}
\hrulefill
\end{figure*}
The corresponding achievable rate for the users at the cell edge is given in \eqref{ratedloedp} on top of next page.
\begin{figure*}
\begin{equation}\label{ratedloedp}
R_{e,k}^{O}=\left(1-\frac{K}{T}\right)(1-\eta)\log_2\left(1+\frac{p_d\beta_h^k\gamma^O_{k,h}\left|\mathbb{E}\left[\frac{f_{k,h}^O}{\hat{f}_{k,h}^O}\right]\right|^2}{p_d\beta_h^k \mathrm{Var}\left[\frac{f_{k,h}^O}{\hat{f}_{k,h}^O}\right]+\mathbb{E}\left[\left|\frac{I^O_{k,h}}{\hat{f}^O_{k,h}}\right|^2\right]+\mathbb{E}\left[\left|\frac{1}{\hat{f}^O_{k,h}}\right|^2\right]}\right)
\end{equation}
\hrulefill
\end{figure*}

As in the case with perfect CSI at the users, when we set $\alpha_h^k=\gamma_{k,h}=0, \forall k$ and $\alpha_g^k=1$ we get the achievable rate of the users at the cell center in Scheme-O with $\eta=1$. Setting $\alpha_h^k=\gamma_{k,h}=0, \forall k$ and $\alpha_h^k=1$ we get the achievable rate of the users at the cell edge in Scheme-O with $\eta=0$. By using time sharing between these two extremes, we obtain all the ergodic achievable rates that Scheme-O can attain.

Table \ref{table} summarizes all the ergodic rate expressions we have obtained, they are all listed in Table \ref{table} with reference to the equation numbers.
\begin{table*}[t]
\caption{\label{table}Summary of Achievable Rate Results}
\begin{center}
    \begin{tabular}{ | c | c | c | c |}
    \hline
    Schemes (users)& Estimated CSIT, Perfect CSIR & Estimated CSIT, no CSIR & Estimated CSIT, CSIR \\ \hline
    Scheme-O (cell center)& \eqref{ratepoc} & \eqref{rateimoc} & \eqref{ratedlocdp} \\ \hline
    Scheme-O (cell edge)& \eqref{ratepoe} & \eqref{rateimoe} & \eqref{ratedloedp} \\ \hline
    Scheme-N (cell center)& \eqref{ratepnc} & \eqref{rateimnc} & \eqref{ratekgdp} \\ \hline
    Scheme-N (cell edge)& \eqref{ratepne} & \eqref{rateimne} & \eqref{ratekhdp} \\ \hline
    \end{tabular}
\end{center}
\end{table*}

Comparing the achievable rates of the different schemes under different CSI assumptions, we observe that the main difference among them is that imperfect CSI at the users is causing self-interference. Without any downlink pilots, this self-interference is proportional to the received power ($p_d\beta$), which fundamentally limits the achievable rate of the user. Therefore we can conclude that neither increasing the DL power nor putting the user closer to the BS would help much. This would not create a large SINR difference at the user, and thus we expect that Scheme-N would not provide much gain. However with DL pilots, the self-interference can be reduced substantially if we increase the DL SNR. This creates a larger SINR difference at the users and thus we expect that Scheme-N would provide higher gains.

\section{Practical Issues and Extensions}\label{practical_issues}
In this section we discuss various issues when implementing the proposed Scheme-N in practical systems and some possible extensions. Due to space limitations, these issues are discussed briefly and in-depth investigations are left for future work.

\subsection{User Pairing}
In this paper we are investigating the effects of imperfect CSI obtained through uplink training. The channels are not known a priori; the only information available at the BS regarding the channel strength is the large scale fading coefficients of the users. As a result the user pairing has to be done based on the large-scale fading coefficients $\{\beta_k\}$. This can also be observed from the achievable rate expressions. This is the same condition that has been discussed in \cite{DFP2016}. However the differences are that first, in our case there is a beamforming gain of order $M$ which effectively increases the SNR and second, the existence of self-interference caused by channel estimation errors. A detailed analysis would be interesting, but has to be left for future work.
\subsection{More than Two Users Per Group}
The proposed Scheme-N can be extended to include more than two users per group. Suppose there are $L$ users in each group $k$ and each user is labeled as user $(k,1)$ to user $(k,L)$. In the channel estimation phase they are assigned the same pilot. The BS estimates a linear combination of the channels from all $L$ users in the group. Then the BS uses this for MRT beamforming. Without loss of generality, assume they have large-scale fading coefficients ordered as $\beta_1^k \leq \beta_2^k \leq \ldots \leq \beta_L^k$. The required condition such that NOMA can be applied is that user $(k,i)$ can decode all messages intended for user $(k,j)$ for all  $j\leq i$. The condition can be written as
\begin{equation}
\mathbb{E}[\log_2(1+\mathrm{SINR}_{k,i})]\geq \mathbb{E}[\log_2(1+\mathrm{SINR}_{k,j})] \quad \forall ~i\geq j,
\end{equation}
where $\mathrm{SINR}_{k,i}$ is the effective SINR of user $(k,i)$ which has different forms according to the availability of CSI. This condition can be met by controlling the pilot power of the users. Detailed analysis of this extension is out of scope and has to be left for future work due to the limit of space.
\subsection{Users with Multiple Antennas}\label{multipleantenna}
In the case when users are equipped with more than one antenna, adding more antennas can be viewed as adding users at the same distance. Thus the same analysis and results can be applied by putting the different antennas of the same user in different groups in Scheme-N. This argument does not consider the possibility of receive beamforming at the users as it requires accurate channel estimation at the users. Since the scenario we considered is when the pilot resources are scarce, the consideration of receive beamforming at the user side is out of scope.
\subsection{Power Control}
Power control in any communication systems is crucial. We have considered both power control in the UL for the pilots and in the DL for the data. They are optimized according to the requirement of the users. In Section \ref{numerical} we will look at the rate region and a particular operating point on the Pareto boundary of the rate region which is obtained by performing power control on both UL pilots and DL data. However these are done by a grid search over different power control coefficients. More efficient algorithms for this purpose would be useful but have to be left for future work.

\color{black}
\section{Other Applications}
\subsection{Application in Multicasting}
A specific application of the techniques in Scheme-N is to multiresolution multicasting\cite{Choi2015}. In multiresoultion multicasting, signals of different resolutions are multicasted to multiple users requesting the same data. Users with low SINR decode only the low resolution signal treating the high resolution signal as noise, while users with high received SINR decode both the low and high resolution signals. It is natural to apply NOMA here since the low resolution signal is wanted by all users in the cell. In this setup we only need to use one uplink pilot for channel training and the same beamforming vector is applied to all users in the cell. This can be viewed as a special case of Scheme-N where the data intended for all users $(k,h)$ are the same and data intended for all users $(k,g)$ are the same.

\subsection{Rate-Splitting for Improving Sum Degree of Freedom}
Recently a rate-splitting approach was proposed to improve the sum degree of freedom in broadcast channels\cite{YKGY2013} which is an approach that was first used for interference channels and then called the the Han-Kobayashi scheme\cite{Gamal2012}. In the rate-splitting scheme, one selected user's message is split into a common part and a private part where the common part can be decoded by all users. The common part is super-imposed on the private part and sent with a different beamformer. All NOMA schemes can be viewed as a special case of the rate-splitting approach where there is no private part for the user $(k,h)$ and all message to user $(k,h)$ is contained in the common part. Our proposed Scheme-N can be adapted for the rate-splitting scheme to handle the problem of pilot shortage by decomposing the message of user $(k,h)$ into two parts and the analysis can be carried out using similar techniques.

\section{Numerical Results}\label{numerical}
In this section we compare the performance of the two schemes in different settings. The comparison is done by comparing the complete achievable rate regions. The achievable rate region is obtained by considering a grid of pilot power control and data power control coefficients to obtain the rate pairs for each set of power control parameters, and then take the convex hull of all the rate pairs. This assumes the use of time-sharing between different sets of power control parameters. This gives an approximate rate region which is a lower bound on the actual rate region.

\subsection{Small-Scale Antenna Systems}
The first setup that we are looking into is the case with a small number of antennas at the BS. In the simulations we choose $M=10$, $K=2$, $\beta_h=1$, $\beta_g=100$, $p_u=p_d=1$. Since we compare schemes that use the same number of pilots, we omit the pre-log penalty caused by the use of pilots for acquiring CSI. For the case without downlink CSI it has fewer pilots than the other cases.
\begin{figure}
\includegraphics[width=\linewidth]{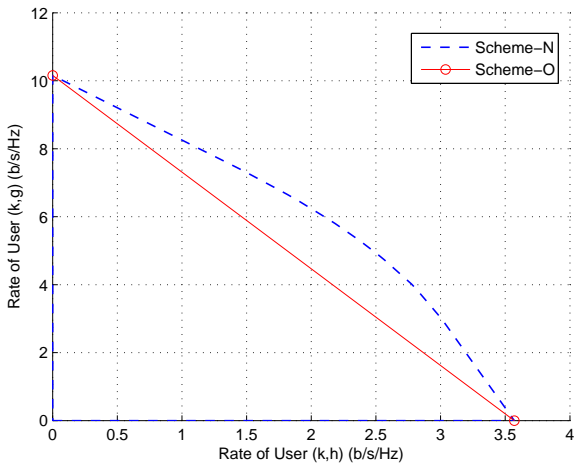}
\caption {\label{10antennasperfect} Achievable rate region with noise free uplink channel estimation and perfect CSI at the users. $M=10$, $K=2$ $\beta_h=1$, $\beta_g=100$, $p_u=p_d=1$.}
\end{figure}

Fig. \ref{10antennasperfect} shows the performance with noise free uplink estimation and perfect CSI at the users. This case represents an upper bound on the performance for practically realizable schemes. From this figure we observe that with perfect CSI available, the performance gained by using NOMA is quite significant. For example, when the rate of user $(k,h)$ is 2.5 b/s/Hz, the rate of user $(k,g)$ can be increased by almost 2 b/s/Hz.

\begin{figure}
\includegraphics[width=\linewidth]{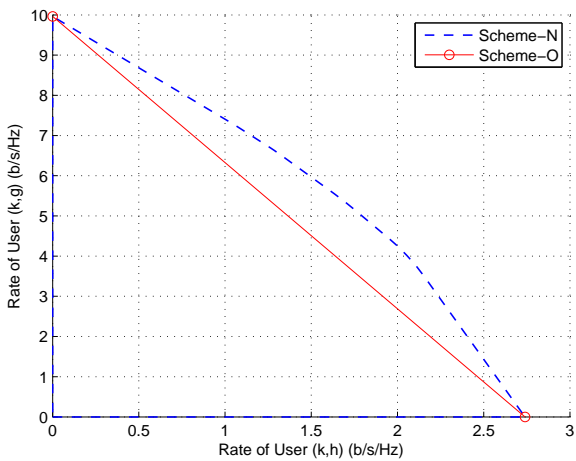}
\caption {\label{10antennasipcsit} Achievable rate region with noisy uplink channel estimation and perfect CSI at the users. $M=10$, $K=2$, $\beta_h=1$, $\beta_g=100$, $p_u=p_d=1$.}
\end{figure}

Fig. \ref{10antennasipcsit} shows the performance with noisy uplink estimation and perfect CSI at the users. Comparing with Fig. \ref{10antennasperfect} we observe that the uplink channel estimation errors do not lower the performance much for the user in the cell center. However the rate of the user at the cell edge loses more than $20\%$, due to the poor quality of the uplink channel estimate. Never the less, the gain from using NOMA is still large.

\begin{figure}
\includegraphics[width=\linewidth]{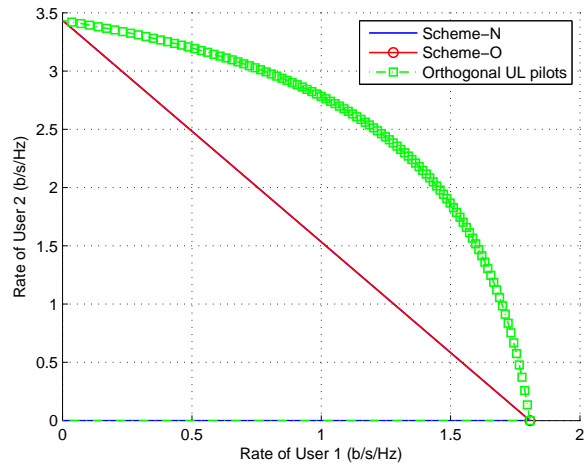}
\caption {\label{10antennasnodlpilots} Achievable rate region with noisy uplink channel estimation and no CSI at the users. $M=10$, $K=2$, $\beta_h=1$, $\beta_g=100$, $p_u=p_d=1$.}
\end{figure}

Fig. \ref{10antennasnodlpilots} shows the achievable rate region with noisy uplink estimation and no CSI at the users. Comparing to Fig. \ref{10antennasipcsit} we see that CSI at the users is critical as Scheme-N and Scheme-O are overlapping. Without CSI, Scheme-N is performing the same as Scheme-O which means there is no gain from using NOMA. We also plot the performance with orthogonal UL pilots for all $K$  users as reference. In Scheme-N we send $K/2$ uplink pilots, while with the `Orthogonal UL Pilots' scheme we send $K$ uplink pilots. In this comparison all schemes do not require downlink pilots. This shows that without taking the penalty of using more pilots, it is better to use orthogonal pilots when DL CSI is not available. When the number of pilot symbols is limited and sending $K$ orthogonal pilots is not possible, we can only compare Scheme-O and Scheme-N. There are still some gains from using NOMA with other sets of parameters (when $M$ is of the order of thousands) than the one considered in this figure, but they are marginal and applying NOMA may not be worth it since it increases the complexity and delays at the user.

\begin{figure}
\includegraphics[width=\linewidth]{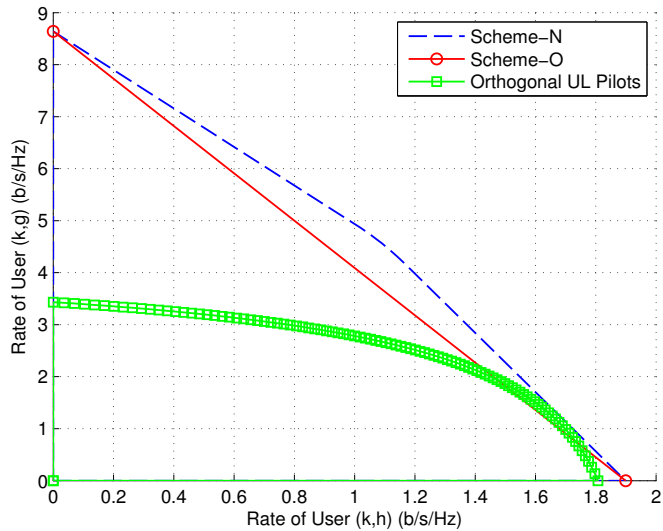}
\caption {\label{10antennas} Achievable rate region with noisy uplink channel estimation and estimated CSI at the users. $M=10$, $K=2$, $\beta_h=1$, $\beta_g=100$, $p_u=p_d=1$.}
\end{figure}

Fig. \ref{10antennas} shows the achievable rate region with noisy uplink estimation and estimated channel gains at the users, which is the most practical scenario. Comparing to Fig. \ref{10antennasnodlpilots} we see that with the estimated channel gains, we see some gains from using NOMA. We also plot the performance with orthogonal UL pilots for all $K$  users as reference. In Scheme-N we send $K/2$ uplink pilots and $K/2$ downlink pilots, while with the `Orthogonal UL Pilots' scheme we send $K$ uplink pilots and no downlink pilots. Comparing the rate regions we see that our proposed Scheme-N outperforms both traditional schemes.

\subsection{Constrained Sum Rate Comparison}
In this subsection we compare a specific operating point on the achievable rate region. We choose the point such that users at the cell edge get the same rate as in Scheme-O with $\eta=0.5$. This means that users at the cell edge do not lose any rate by using NOMA. We compare the sum rate of the whole cell under this constraint and vary the number of antennas, large-scale fading parameters. In all plots we choose $K=2$ with $1$ user at the cell edge and $1$ user in the cell center. For Scheme-O, $1$ user is scheduled in one slot, thus full power is used with $\gamma_{1,g}^O=1$ and $\gamma_{1,h}^O=1$. For Scheme-N, we vary the power between the two users to find the optimal constrained sum rate.

\begin{figure}
\includegraphics[width=\linewidth]{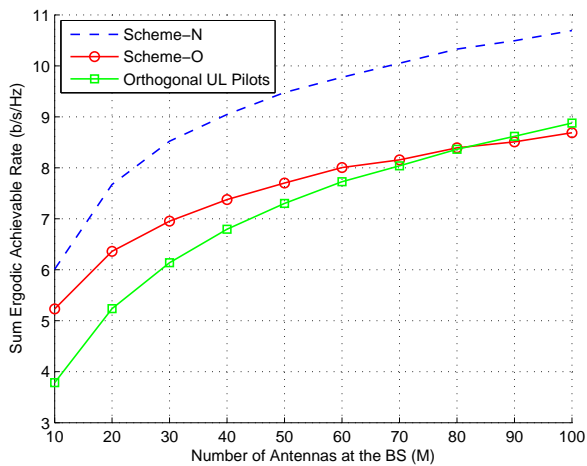}
\caption {\label{differentantennas} Sum rate with noisy uplink channel estimation and estimated channel gains at the users for different number of antennas at the BS (M). $K=2$, $\beta_h^1=1$, $\beta_g^1=100$, $p_u=p_d=1$. The rate of the user $(k,h)$ is constrained to be the rate it would get when using Scheme-O with $\eta=0.5$.}
\end{figure}

In Fig. \ref{differentantennas} we compare the constrained sum rate with different numbers of antennas $M$ at the BS with $\beta_h^1=1$, $\beta_g^1=100$, and $p_u=p_d=1$. From the plot we see that the sum rate difference between Scheme-O and Scheme-N is increasing when $M$ increases. This contradicts the common notion that NOMA is only useful when the number of antennas at the BS is less than the total number of antennas at the users\cite{DAP2016}. The reason for this is that CSI at the users is very important in NOMA, and when $M$ is small, the estimation quality is not good enough, resulting in a lower rate. When $M$ increases, the estimation quality at the users increases (due to the array gain that increases the SNR with $M$ in the DL estimation) and hence the gain from NOMA is more significant. We also observe that the performance gap between Scheme-O and the `Orthogonal UL Pilots' decreases with $M$ and eventually Scheme-O performs worse than the latter. This is due to the channel hardening effect. The more antennas at the BS, the less fluctuation in the norm of the channel vector (normalized by the number of antennas): the norm of the realization of the channel vector is almost equal to its statistical mean.

\begin{figure}
\includegraphics[width=\linewidth]{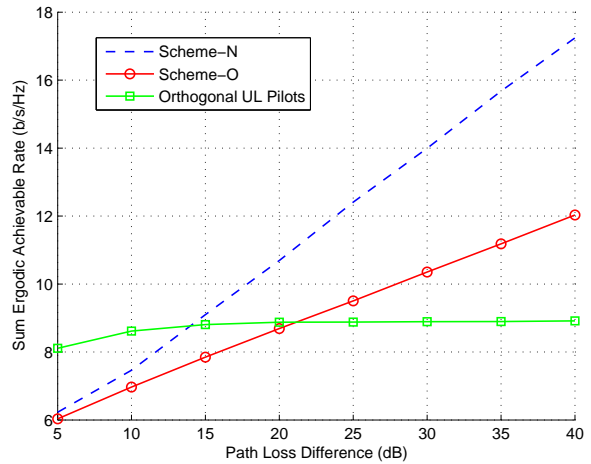}
\caption {\label{differentpathloss} Sum rate with noisy uplink channel estimation and estimated channel gains at the users with path loss differences (large-scale fading of user $(k,h)$ is fixed while large-scale fading of user $(k,g)$ is varying). $M=100$, $K=2$, $\beta_h^1=1$, $p_u=p_d=1$. The rate of the user $(k,h)$ is constrained to be the rate it would get when using Scheme-O with $\eta=0.5$.}
\end{figure}

In Fig. \ref{differentpathloss} we compare the constrained sum rate with different large-scale fading coefficients between the paired users, with $M=100$, $p_u=p_d=1$, $\beta_h^1$ is fixed to be $1$ while $\beta_g^1$ varies. From the plot we see that the sum rate difference between Scheme-O and Scheme-N is increasing with the large-scale fading difference. This is expected and matches the results for single antenna NOMA systems\cite{SKB2013}. When the large-scale fading difference is small, the orthogonal UL pilots scheme gives the best performance because both users have low SNR and therefore DL estimates are of poor quality. This verifies the importance of user pairing in NOMA.

\subsection{Effect of Number of Users or Number of Antennas at the User}
In this subsection, we look into the effect of increasing the number of users in the cell, or equivalently, the number of antennas at the users. We compare the same operating point as in the previous subsection. In the simulation we have the same number of users at the cell edge and in the cell center. The users at the cell edge have the same large-scale fading $\beta_h^k=\beta_h,~k=1, \ldots, K/2$ and the users at the cell center have the same large-scale fading $\beta_g^k=\beta_g,~k=1,\ldots,K/2$. For Scheme-O, all users that are scheduled in one slot have the same large-scale fading, thus equal power allocation with $\gamma_{k,g}^O=2/K,~k=1,\ldots,K/2$ and $\gamma_{k,h}^O=2/K,~k=1,\ldots,K/2$ is optimal in terms of achievable sum rate. For Scheme-N, we allocate equal power to each group which is also optimal for the sum rate due to the symmetry in the $K/2$ groups. The length of the coherence interval $T$ is chosen to be $200$ which is corresponding to a typical fast fading scenario.

In Fig. \ref{numberofuser}, we compare the constrained sum rates with different numbers of users, with $M=100$, $p_u=p_d=1$, $\beta_h=1$, and $\beta_g=100$. From the figure we see that Scheme-N outperforms the other schemes only when there is one group of users. As soon as there are more than one group, Scheme-N and Scheme-O are the same (Scheme-O is a special case of Scheme-N) and they are both worse than the `Orthogonal UL Pilots' scheme. This is because the inter-group interference lowers the SINR difference between the cell center user and the cell edge user, and thus NOMA does not provide any gain. This shows that the SINR difference is the key factor for NOMA to outperform the orthogonal scheme, but not the SNR difference. Moreover, from Fig. \ref{numberofuser} we also observe that when we have more users in the cell, it is better to user multiuser beamforming instead of NOMA. That is because the inter-group interference levels are the same for all schemes and that is the major factor that lowers the SINR. In this case, increasing the beamforming gain is more effective than removing the intra-group interferences.

\begin{figure}
\includegraphics[width=\linewidth]{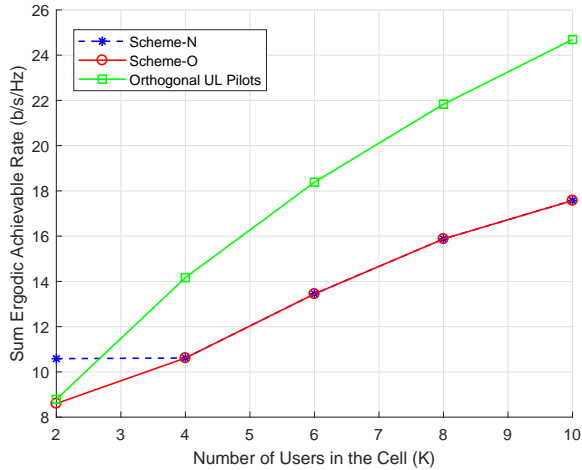}
\caption {\label{numberofuser} Sum rate with noisy uplink channel estimation and estimated channel gains at the users with different number of users $K$. $M=100$, $\beta_h=1$, $\beta_g=100$ and $p_u=p_d=1$. The rate of the user $(k,h)$ is constrained to be the rate it would get when using Scheme-O with $\eta=0.5$.}
\end{figure}
\color{black}
\section{Conclusion}
In this work we analyzed the performance of NOMA in multiuser MIMO under practical scenarios where the CSI was obtained through pilot signaling. The performance analysis was done for a conventional orthogonal scheme and a NOMA scheme under this setup. Extensive simulations were done using the derived achievable rate expressions. From the simulation results we draw the following conclusions:
\begin{enumerate}
\item NOMA works well only when high quality CSI is available at the user and there is no inter-group interference;
\item When there is more than one group, it is preferable to use multiuser beamforming instead of NOMA. In this case, we need a higher beamforming gain to enhance the SINR;
\item The gain of NOMA increases with the path loss difference between the users in the same NOMA group. When the difference is small, multiuser beamforming is preferable.
\end{enumerate}
\color{black}
The above conclusions hold when the BS precoding is restricted to MR. For other more advanced precoding methods, most importantly zero-forcing, the observations may change because accurate channel estimates are required by these methods. Some initial simulations have shown that the proposed shared-pilot scheme only provides little gain with zero-forcing precoding. More exploration is needed to find out the strategy of applying NOMA in training based systems with zero-forcing type precoding and it is left for future work.
Moreover, from our simulation results we see that CSI at the user is critical for the NOMA scheme. Instead of sending DL pilots, blind channel estimation methods designed for NOMA can help to reduce pilot overhead and therefore is worthy of exploration.
\begin{appendices}
\section{Proof of Proposition \ref{iprateh}}\label{proofiprateh}
Using results from \cite[Section 2.3.2]{MLYN2016}, we have the capacity lower bound:
\begin{equation}
R_{k,h}^{Nip}=\log_2\left(1+\frac{\left|\mathds{E}\left[c_k\sqrt{\beta_h^k}\bh_k^T\bar{\by}_{u,k}^{N*}\right]\sqrt{p_d\gamma_{k,h}}\right|^2}{\mathrm{Var}(z_{k,h})}\right).
\end{equation}

The numerator can be calculated as
\begin{equation}
\left|\mathds{E}\left[c_k\sqrt{\beta_h^k}\bh_k^T\bar{\by}_{u,k}^{N*}\right]\sqrt{p_d\gamma_{k,h}}\right|^2=p_d\lambda_{k,h}\beta_h^k\gamma_{k,h}M,
\end{equation}
and the denominator can be calculated as
\begin{equation}
\begin{aligned}
\mathrm{Var}[z_{k,h}]&=\mathrm{Var}\biggl[\biggl(c_k\sqrt{\beta_h^k}\bh_k^T\bar{\by}_{u,k}^{N*}\sqrt{p_d\gamma_{k,h}}\\
&-\mathbb{E}\left[c_k\sqrt{\beta_h^k}\bh_k^T\bar{\by}_{u,k}^{N*}\sqrt{p_d\gamma_{k,h}}\right]\biggr)s_{k,h}\\
&+c_k\sqrt{\beta_h^k}\bh_k^T\bar{\by}_{u,k}^{N*}\sqrt{p_d\gamma_{k,g}} s_{k,g}+I_{k,h}+ n_{k,h}\biggr]\\
&=\mathrm{Var}\left[c_k\sqrt{\beta_h^k}\bh_k^T\bar{\by}_{u,k}^{N*}\right]+\mathrm{Var}[I_{k,h}]+\mathrm{Var}[n_{k,h}]\\
&+\left|\mathds{E}\left[c_k\sqrt{\beta_h^k}\bh_k^T\bar{\by}_{u,k}^{N*}\right]\sqrt{p_d\gamma_{k,g}}\right|^2\\
&=p_d\beta_h^k+1+p_d\lambda_{h,k}\beta_h^k\gamma_{k,g}M.
\end{aligned}
\end{equation}

\section{Proof of Proposition \ref{iprateg}}\label{proofiprateg}
Using the results from \cite[Section 2.3.2]{MLYN2016}, we have the capacity lower bound:
\begin{equation}
R_{k,g}^{Nip}=\log_2\left(1+\frac{\left|\mathds{E}\left[c_k\sqrt{\beta_g^k}\bg_k^T\bar{\by}_{u,k}^{N*}\right]\sqrt{p_d\gamma_{k,g}}\right|^2}{\mathrm{Var}(z_{k,g})}\right).
\end{equation}

The numerator can be calculated as
\begin{equation}
\left|\mathds{E}\left[c_k\sqrt{\beta_g^k}\bg_k^T\bar{\by}_{u,k}^{N*}\right]\sqrt{p_d\gamma_{k,h}}\right|^2=p_d\lambda_{k,g}\beta_g^k\gamma_{k,g}M,
\end{equation}
and the denominator can be calculated as
\begin{equation}
\begin{aligned}
\mathrm{Var}[z_{k,g}]&=\mathrm{Var}\biggl[\biggl(c_k\sqrt{\beta_g^k}\bg_k^T\bar{\by}_{u,k}^{N*}\sqrt{p_d\gamma_{k,h}}\\
&-\mathbb{E}\left[c_k\sqrt{\beta_g^k}\bg_k^T\bar{\by}_{u,k}^{N*}\sqrt{p_d\gamma_{k,h}}\right]\biggr)s_{k,h}\\
&+\biggl(c_k\sqrt{\beta_g^k}\bg_k^T\bar{\by}_{u,k}^{N*}\sqrt{p_d\gamma_{k,g}}\\
&-\mathbb{E}\left[c_k\sqrt{\beta_g^k}\bg_k^T\bar{\by}_{u,k}^{N*}\sqrt{p_d\gamma_{k,g}}\right]\biggr)s_{k,g}+I_{k,g}+ n_{k,g}\biggr]\\
&=\mathrm{Var}\left[c_k\sqrt{\beta_g^k}\bg_k^T\bar{\by}_{u,k}^{N*}\right]+\mathrm{Var}[I_{k,g}]+\mathrm{Var}[n_{k,g}]\\
&=p_d\beta_g^k+1.
\end{aligned}
\end{equation}
\end{appendices}
\bibliographystyle{IEEEtran}
\bibliography{noma}
\end{document}